\numberwithin{equation}{section}
\theoremstyle{plain}
 \newcommand{\calJ}{\mathcal{J}}
 \newcommand{\bbR}{\mathbb{R}}
 \newcommand{\bbS}{\mathbb{S}}
 \newcommand{\argmin}{\mathop{\rm arg\,min}}
 \newcommand{\trace}{\mathop{\rm trace}}
 \newcommand{\Range}{\mathop{\rm Range}}
 \newcommand{\var}{\mathop{\rm var}}
 \newcommand{\E}{\mathop{\rm E}}
 \newcommand{\diag}{\mathop{\rm diag}}
 \newcommand{\clr}{\mathop{\rm clr}}
\newtheorem{theorem}{Theorem}[section]
\newtheorem{proposition}[theorem]{Proposition}
\newtheorem{definition}[theorem]{Definition}
\begin{document}

\begin{frontmatter}
\title{Kernel-Penalized Regression for Analysis of Microbiome Data}
\runtitle{Kernel-Penalized Regression}

\begin{aug}
\author{\fnms{Timothy W.} \snm{Randolph}\thanksref{m1}\ead[label=e1]{trandolp@fredhutch.org}},
\author{\fnms{Sen} \snm{Zhao}\thanksref{m2}},
\author{\fnms{Wade} \snm{Copeland}\thanksref{m1}},
\author{\fnms{Meredith} \snm{Hullar}\thanksref{m1}}
\and
\author{\fnms{Ali} \snm{Shojaie}\thanksref{m2}
}


\affiliation{Fred Hutchinson Cancer Research Center\thanksmark{m1} and University of Washington\thanksmark{m2}}


\end{aug}

\begin{abstract}
The analysis of human microbiome data is often based on dimension-reduced graphical
displays and clusterings derived from vectors of microbial abundances in each sample.
Common to these ordination methods is the use of biologically motivated definitions of
similarity.  Principal coordinate analysis, in particular, is often performed using
ecologically defined distances, allowing analyses to incorporate context-dependent,
non-Euclidean structure. In this paper, we go beyond dimension-reduced ordination methods and
describe a framework of high-dimensional regression models that extends these
distance-based methods.  In particular, we use kernel-based methods to show how to
incorporate a variety of extrinsic information, such as phylogeny, into penalized
regression models that estimate taxon-specific associations with a phenotype or
clinical outcome. Further, we show how this regression framework can be used to address
the compositional nature of multivariate predictors comprised of relative abundances;
that is, vectors whose entries sum to a constant. We illustrate this approach with
several simulations using data from two recent studies on gut and vaginal microbiomes.
We conclude with an application to our own data, where we also incorporate a
significance test for the estimated coefficients that represent associations between
microbial abundance and a percent fat.
\end{abstract}


\begin{keyword}
\kwd{compositional data}
\kwd{distance-based analysis}
\kwd{kernel methods}
\kwd{microbial community data}
\kwd{penalized regression}
\end{keyword}

\end{frontmatter}

\section{Introduction}

A common tool in the analysis of data from microbiome studies is a scatterplot of
dimension-reduced microbial abundance vectors.  This is a display of the samples' beta
diversity which, in ecology, refers to differences among various habitats. When applied
to human studies, beta diversity describes the variation in microbial community structure
across sampling units (e.g., human subjects): a beta diversity plot displays the $n$
sampling units with respect to the principal coordinates of their microbial abundance
vectors, each consisting of measures on the $p$ taxa observed in the study; see, e.g.,
\citet{Claesson:2012,Koren:2013,Kuczynski:2010,Goodrich:2014a}. This principal coordinates
analysis (PCoA; or multidimensional scaling, MDS) begins with an $n\times n$ matrix of
pairwise dissimilarities between abundance vectors. The choice of dissimilarity measure
may greatly influence the biological interpretation \citep{Lozupone:2007,Fukuyama:2012}.
Euclidean distance is rarely used.

Dissimilarity measures that account for phylogenetic relationships among the taxa are
assumed to enhance statistical analyses --- for instance, to improve the power of
statistical tests --- because they incorporate the degree of divergence between sequences
\citep{Chen:2012} and do not ignore ``the correlation between evolutionary and ecological
similarity" \citep{Hamady:2009}. The UniFrac distance \citep{Lozupone:2005}, in
particular, is based on the premise that taxa which share a large fraction of the
phylogenetic tree should be viewed as more similar than those sharing a small fraction of
the tree. In the unweighted version of UniFrac, each taxon is quantified merely by its
presence or absence; the distance between a pair of samples is based on the number of
branches in the tree shared by both. Figure~\ref{fig:PCoA-UniFrac-age}(a) is a beta
diversity plot of $n=100$ human microbial abundance vectors with $p=149$ taxa based on
data from \citet{Yatsunenko:2012}. The 2-dimensional coordinates of the samples are
displayed with respect to the unweighted UniFrac distance, and each sample is colored
according to the age of the subject.

\begin{figure}
\centering
{\includegraphics[width=\textwidth]{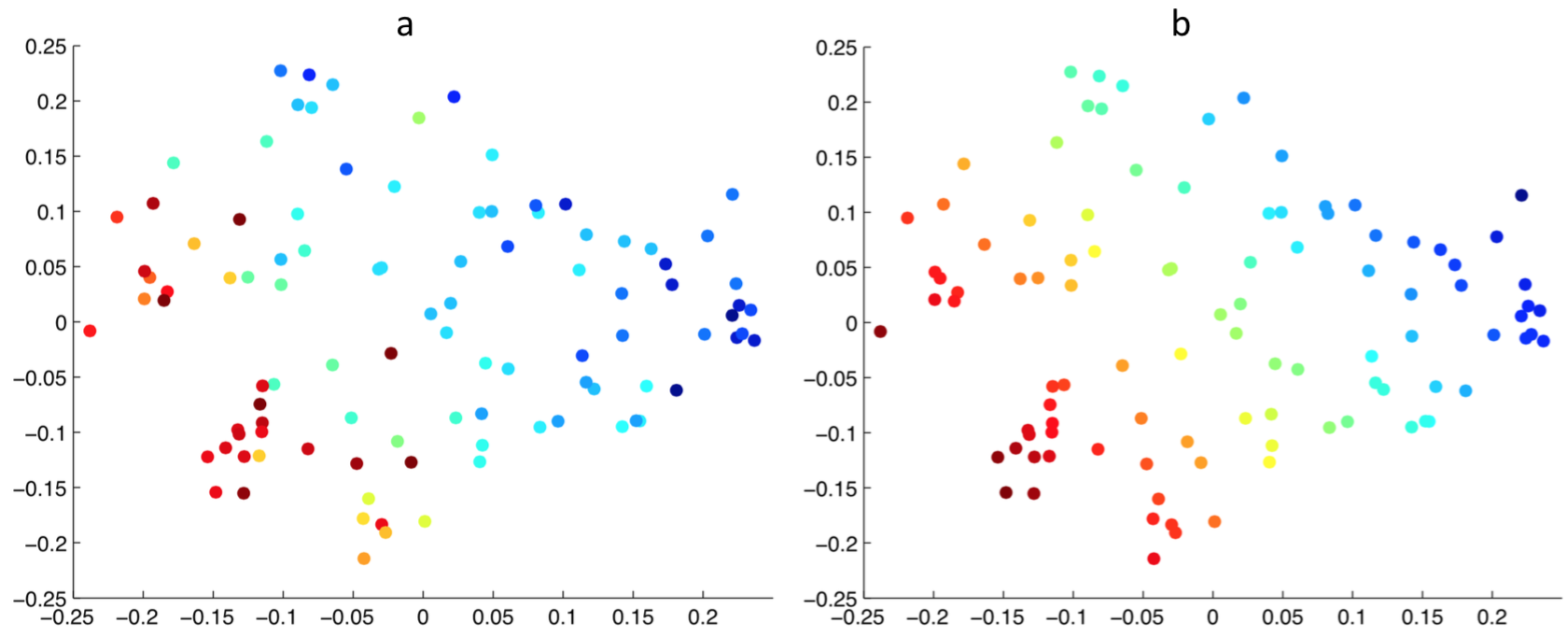}}
\caption{\small \textbf{PCoA plots of data from \citet{Yatsunenko:2012}.} \textbf{(a)}: PCoA plot with respect to unweighted UniFrac distance, colored according to log(age) of subject.
\textbf{(b)}: PCoA plot with respect to unweighted UniFrac distance, colored according to
$y_{\mbox \footnotesize True}$ from the model in Eq.~\eqref{eq:KPCR-model} with $\epsilon=0$.}
\label{fig:PCoA-UniFrac-age}
\end{figure}

Dissimilarity measures in microbiome studies are many and varied, with a rich collection
that, like UniFrac, exploit the phylogentic structure: \citet{Chen:2012} generalize
UniFrac by reweighting rare and abundant lineages; double principal coordinate analysis
(DPCoA) \citep{Pavoine:2004}, as shown by \citet{Purdom:2011}, generalizes PCA by
incorporating the covariance that would arise if the data was created by a process
modeled by the tree; the \emph{edge PCA} method of \citet{Matsen:2013} incorporates taxon
abundance information at all nodes in a phylogenetic tree, rather than just the leaves of
the tree, and \citet{EvansMatsen:2012} formalize the mathematical interpretation of
UniFrac as just one example within a large family of Wasserstein (or earth mover's)
metrics.  A wide variety of non-phylogenetic dissimilarities are also in common use, such
as Bray-Curtis \citep{HMP:2012} and Jenson-Shannon \citep{Koren:2013}, among others.


While PCoA plots provide valuable graphical insight into the relationships among
microbial profiles and an outcome or phenotype, they do not \emph{quantify} this
association. More importantly, the (sets of) taxa associated with the outcome --- and the
magnitude or statistical significance of such associations --- are not ascertained from a
PCoA plot; once a matrix of (dis)similarities between samples is formed, it is not clear
how to identify individual taxa that are associated with an outcome. Specifically, given
a PCoA plot as in Figure~\ref{fig:PCoA-UniFrac-age}(a), with structure imposed by the
chosen dissimilarity matrix (e.g., unweighted UniFrac) and with associations implied by a
class label or continuous outcome (e.g., age), how does one estimate which taxa or
subcommunities are associated with this outcome?  We address this question by formulating
multivariate regression models that are constrained by the structure of the
(dis)similarity matrix. This is made possible by exploiting an equivalence between a
taxon-based (primal space) and sample-based (dual space) formulation of our penalized
regression models. While exploiting such an equivalence is straightforward in the special
case of ridge regression (with purely Euclidean structure), it becomes complicated when
more general distance measures are used. To this end, we show how a little-used
regularization scheme by \citet{Franklin:1978} provides a dual-space regression
coefficient estimate that naturally connects to primal-space coefficients. Because a
dissimilarity matrix can be used to construct a similarity matrix (as commonly done in
classical MDS \citep{MardiaKentBibby:1980}), we work with kernels, rather than distances,
and allow for general kernels, including those constructed from a nonlinear feature map.

In addition to complications stemming from more general distances, the analysis of
microbiome data is also complicated by the compositional nature of the data itself. More
specifically, taxon measures typically represent \emph{relative}, rather than absolute,
abundances. The $p$-variate relative abundance vectors are thus \emph{compositional} in
that they are constrained to a simplex within $\bbR^p$; such data do not reside in a
Euclidean vector space \citep{Aitchison:2003}. Consequently, spurious correlations arise
and standard multiple regression models fail.
Our proposed KPR framework, however, addresses this: the centered log (CLR) transform of
the relative abundance vectors first removes the vectors from the simplex, then the
estimation process is constrained using a penalization term defined by Aitchison's
variation matrix. This approach takes a different perspective from the recent proposal of
\citet{Li:2014} which forces the estimated coefficient vector to reside in the simplex.
Given that the CLR transforms the compositional vectors to Euclidean space and that the
units of the Aitchison variation matrix are the same as the CLR transformed data
\citep{Egozcue-Chap2:2011}, our constraint seems more suitable for the geometry of the
problem.

In summary, we describe a family of high-dimensional regression problems in
Section~\ref{sec:PenReg}, which are designed to incorporate the assumptions that are
tacitly implied by various exploratory and graphically-focused PCoA plots common in
microbiome studies. We show how phylogenetic and other structure can be incorporated via
kernel penalized regression in either the primal ($p$-dimensional) feature space or the
dual ($n$-dimensional) samples space; see Sections~\ref{sec:DPCoA-PenReg} and
\ref{sec:KPR}. Finally, our proposed framework leads to an approach, described in
Section~\ref{sec:compdata}, for addressing well-known problems that arise from applying
standard (Euclidean-based) statistical models to compositional data.
Section~\ref{sec:Examples} illustrates the proposed framework with simulations based on
publicly available data, while Section~\ref{sec:Applic} presents an application to our
recent microbiome study of premenopausal women. In this analysis, we obtain estimates of
associations between microbial species and percent fat measured in premenopausal women,
and also provide inference for these estimates by applying a recent significance test
\citep{ZhaoShojaie:2015} in our kernel-penalized regression (KPR) framework.

\section{Kernel Penalized Regression for Microbiome Data}\label{sec:PenReg}

We describe a family of multiple regression problems aimed at incorporating assumptions
that are implicit in PCoA plots common in microbiome studies.
We begin in Section~\ref{sec:PCoA-PCR} by establishing notation and concepts from
existing dimension-reduction (ordination) methods with the goal of extending them to
non-truncated (penalized) regression models.
Section~\ref{sec:DPCoA-PenReg} extends PCoA and PCR to penalized regression models in the
primal space in a manner that incorporate structures implicit in recent microbiome
analyses.
Section~\ref{sec:KPR} extends kernel ridge regression to general (non-$L^2$) structure
and the use of two kernels. This extension exploits a \emph{dual-space} regularization
scheme of Franklin \citep{Franklin:1978}.
Section~\ref{sec:compdata} describes how our proposed framework can be applied to
formulate a penalized regression model that accounts for the structure of compositional
data.

We denote by $y_i$, $i=1,.., n$, a real-valued quantified trait, and by $x_i =
[x_{i1},..., x_{ip}]'$ a $p$-dimensional vector of microbial abundance values measured
for each of $n$ subjects. Denote by $X$ the $n\times p$ sample-by-taxon matrix whose
$i$th row is $x_i'$. We assume throughout that the columns of $X$ are mean centered. For
now, we assume that the abundance values are appropriately normalized/transformed and
postpone the treatment of compositional data to Section~\ref{sec:compdata}.  The
transpose of a matrix $A$ is denoted by $A'$ and the Frobenius norm is denoted as
$\|A\|_F$.  The Euclidean norm of a vector $x\in\bbR^p$ is denoted $\|x\|_{\bbR^p}$,
$\|x\|_{2}$ or simply $\|x\|$.

\subsection{Background for PCoA and principal component regression}\label{sec:PCoA-PCR}

Consider first the Euclidean PCoA, which is obtained from the eigenvectors of the
\emph{kernel} matrix $K_I:=XX'$ of inner products $K_{ij}=\langle x_i,x_j\rangle$ between
samples. Let $\calJ$ be the centering matrix, $\calJ = I -\frac{1}{n}
\boldsymbol{1}\boldsymbol{1}'$, where $\boldsymbol{1}$ is the $n\times 1$ vector of ones.
Then, it can be seen that $XX' = -\frac{1}{2}\calJ\Delta^E\calJ$, where $\Delta^E$ is the
$n\times n$ matrix of {\em squared} Euclidean distances between samples:
$\Delta^E_{i,j}=\|x_i-x_j\|_{\bbR^p}^2$. The relationship between a kernel and a distance
matrix $\Delta$ is more general. In particular, if $\Delta$ is any $n\times n$ symmetric
matrix of squared dissimilarities between vectors in $\bbR^p$ then $H
=-\frac{1}{2}\calJ\Delta\calJ$ serves as a kernel matrix summarizing similarities; see,
e.g., \citet{Gower:1966,Pekalska:2002}. A particular case involves a $p\times p$
symmetric, positive definite matrix $Q$ that defines an inner product $\langle
x_i,x_j\rangle_Q=x_i'Qx_j$ on $\bbR^p$. If $\Delta^Q$ denotes the matrix of squared
distances, $\Delta^Q_{i,j}=\|x_i-x_j\|_{Q}^2 = \langle x_i-x_j, x_i-x_j\rangle_Q$,
defined with respect to this inner product, then $XQX'=-\frac{1}{2}\calJ\Delta^Q\calJ$ is
also a similarity kernel for the $n$ samples. We will denote this kernel by $K_Q=XQX'$.
Similarly, one may start with a matrix $\Delta^U$ of squared distances defined by a
tree-based UniFrac dissimilarity \citep{Lozupone:2005}, and define a similarity kernel by
$H=-\frac{1}{2}\calJ\Delta^U\calJ$.

In graphical displays, two or three coordinates are typically used to explore the
relationship between samples. Let $K = US^2U'$ be the eigen-decomposition of any
similarity kernel, $K$, where $U$ is the matrix whose columns are eigenvectors and
$S^2=\diag\{\sigma_j^2\}$ is the diagonal matrix of eigenvalues. The two-dimensional PCoA
plot is then the collection of points $\{\eta_{i1}, \eta_{i2}\}_{i=1}^n :=\{(\sigma_1
U_{i1}, \sigma_2 U_{i2})\}_{i=1}^n$; i.e., a plot of the points represented by the first
two columns of the matrix $US$. These points are often colored according to a grouping
label or continuous value, $\{y_i\}_{i=1}^n$, to graphically explore the existence of an
association between the outcome $y$ and the sample profiles summarized by the first few
columns of $US$.  So, a PCoA plot is a graphical depiction of a two-component regression
model of association:
\begin{equation}\label{eq:yreduced}
y_i = \gamma_1\eta_{i1} + \gamma_2\eta_{i2}+\epsilon, \quad i=1,\ldots,n,
\end{equation}
where $\eta_1$ and $\eta_2$ are the first two PCoA axes.  Ordinary principal component
regression (PCR) corresponds to the case that $\eta_{1}$ and $\eta_{2}$ come from the
Euclidean kernel $K_I=XX'$.  On the other hand, the configuration of points in
Figure~\ref{fig:PCoA-UniFrac-age}(b) correspond to the first two eigenvectors of the
kernel defined by an unweighted UniFrac distance matrix $\Delta^U$, and colors of
individual points correspond to the values of $y$ from eq.~\eqref{eq:yreduced} with
$\epsilon = 0$.

Let $A_{(k)}$ denote the first $k$ columns of a matrix $A$, or its first $k$ rows and
columns if $A$ is diagonal. Then, using the singular value decomposition (SVD), $X=USV'$,
if we express the dimension-reduced approximation of $X$ as $\breve{X} :=
U_{(2)}S_{(2)}V_{(2)}'$, then eq.~\eqref{eq:yreduced} can be written as
\begin{equation}\label{eq:gammaPCR}
  \begin{aligned}
     y &= \gamma_1\eta_{1} + \gamma_2\eta_{2}+\epsilon \\
       &= U_{(2)}S_{(2)}\,\gamma + \epsilon \\
       &= \breve{X}V_{(2)}\,\gamma + \epsilon,
  \end{aligned}
\end{equation}
where $\gamma =[\gamma_1 \ \gamma_2]'$. Here, $\breve{X}V_{(2)}=U_{(2)}S_{(2)}$, and
$\Range(\breve{X}')=\Range(V_{(2)})$. Therefore, assuming a coefficient vector $\beta$ of
the form $\beta=\breve{X'}\gamma$, the model $y = \breve{X}V_{(2)}\,\gamma + \epsilon$
can be written as $y=\breve{X}\beta+\epsilon$. So inherent in a Euclidean PCoA plot is an
implicit coefficient vector, $\beta$, which models a linear association between $y$ and
$\breve{X}$. Using the SVD of $X$ in \eqref{eq:gammaPCR}, the PCR estimate of
$\beta\in\bbR^p$ is expressed as
\begin{equation}\label{eq:betaPCR}
  \hat\beta_{\mbox{\tiny PCR}} = (\breve{X}'\breve{X})^{\dag}\breve{X}'y
    = V_{(2)}S_{(2)}^{-1}U_{(2)}'y
    = \sum_{k=1}^2\frac{1}{\sigma_k}u_k'y\,v_k,
\end{equation}
where $\dag$ denotes the Moore-Penrose inverse.


\subsection{Penalized regression and DPCoA}\label{sec:DPCoA-PenReg}
An alternative to a Euclidean PCR is the ordinary ridge regression~\citep{HoerlKen:1970},
\begin{equation}\label{eq:betaRidgeSVE}
  \hat\beta_{\mbox{\tiny ridge}} = (X'X+\lambda I)^{-1}X'y
   = \sum_{k=1}^n\left(\frac{\sigma_k^2}{\sigma_k^2+\lambda^2}\right)\frac{1}{\sigma_k}u_k'y\,v_k,
\end{equation}
in which the terms are re-weighted instead of being truncated, as in
$\hat\beta_{\mbox{\tiny PCR}}$. The estimate in~\eqref{eq:betaRidgeSVE} is the solution
of the penalized least squares regression problem, $\hat\beta_{\mbox{\tiny ridge}} =
\argmin_{\beta}\left\{\|y-X\beta\|^2 + \lambda\|\beta\|^2\right\}$, where here and
throughtout $\lambda$ is a tuning parameter that controls the amount of shrinkage or size
of $\beta$ in the penalty term.  Here the penalty is simply the Euclidean (or $\ell^2$)
norm on $\bbR^p$, but a wide range of penalty terms have been proposed to replace or
extend this particular form of regularization; see \citet{Buhlmann:2014} for a review of
the most established methods. These methods, such as the lasso, elastic net or SCAD do
not incorporate any extrinsic information, but a variety of other penalization methods
have been proposed which aim to do this. For instance, \citet{Tanaseichuk:2014} uses a
tree-guided penalty~\citep{Kim_Xing:2010} to incorporate such structure into a penalized
logistic regression framework to encourage similar coefficients among taxa according to
their relationships in the phylogenetic tree. \citet{TibshTaylor:2011} study the solution
path for computing a ``generalized lasso" estimate in which an $\ell^2$ penalty is
replaced with an $\ell^1$ penalty applied to a linear transformation of the features,
$\lambda\|L\beta\|_1$. Within the context of genetic networks, \citet{LiLi:2008}
accounted for network structure by augmenting the $\ell^1$ penalty with a second penalty
of the form $\lambda_2\|\beta\|_{\mathcal{L}}^2 = \beta'\mathcal{L}\beta$, where
$\mathcal{L}$ denotes the graph Laplacian matrix corresponding to pre-defined connections
between genes in a pathway.

For now, we consider a positive definite $p\times p$ matrix $Q$ with a Cholesky
decomposition $Q=LL'$, and a penalty term of the form $\|L^{-1}\beta\|^2 =
\|\beta\|^2_{Q^{-1}}=\beta'Q^{-1}\beta$.  The generalized ridge (or Tikhonov
regularization \citep{GolubVanLoan:2012}) estimate with respect to $Q$ is then defined as
\begin{equation}\label{eq:hatBetaQ}
\begin{aligned}
 \hat\beta_Q 
        & =\argmin_{\beta}\{ \|y-X\beta\|^2 + \lambda\|\beta\|_{Q^{-1}}^2\}
        = (X'X + \lambda Q^{-1})^{-1}X'y \\
       & = \sum_{k=1}^{n} \left(\frac{\sigma_k^2}{\sigma_k^2+\lambda \mu_k^2}\right) \frac{1}{\sigma_k} u_k'y \, v_k,
\end{aligned}
\end{equation}
This estimate takes the same form as \eqref{eq:betaRidgeSVE} but now the vectors $u_k$
and $v_k$ arise from the SVD of $XL=USV'$.

As an aside, it is worth noting that if $A$ denotes any matrix with $p$ columns, the
structure of an estimate $\hat\beta_A$ from a penalty term of the form $\|A\beta\|_{2}^2$
is determined by the joint eigenstructure of the pair $(X,A)$ via the generalized
singular value decomposition.\footnote{We refer here to the generalized singular value
decomposition (GSVD) of \citet{VanLoan:1976}, a simultaneous diagonlization of two
matrices. A different SVD generalization \citep{Greenacre:1984} imposes constraints on
left and right singular vectors of a matrix.} In particular, the basis expansion of
$\hat\beta_Q$ in \eqref{eq:hatBetaQ} is given in terms of the generalized singular
vectors of $(X,L^{-1})$. Although the ridge estimate (with $Q=I_p$) is biased, an
informed choice of penalty term can, in fact, reduce the bias \citep{Randolph:2012}.

Now consider the context of phylogentic information and let $\delta$ represent the matrix
of squared patristic distances between pairs of taxa --- i.e., the sum of branch lengths
between each pair of taxa on the leaves of a phylogenetic tree.  Set
$Q=-\frac{1}{2}\calJ\delta\calJ$, a matrix of similarities between taxa.  Double
principal coordinate (DPCoA) analysis was proposed as a multi-step procedure by
\citet{Pavoine:2004} to provide an alternative to ordinary PCoA by incorporating both the
structure among samples and the structure implied by species distribution among
subcommunities; i.e., the phylogeny as summarized by $Q$. \citet{Purdom:2011} clarified
the original multi-step DPCoA procedure and showed how it can be more simply understood
as a generalized PCA (gPCA) in which one obtains the new coordinates from the
eigenvectors of $K_Q=XQX'$. 
Note that when $Q$ is the identity matrix ($Q = I_p$), DPCoA reduces to PCA/MDS. As
emphasized in \citet{Purdom:2011}, the use of a non-identity $Q$ matrix incorporates
structure from known relationships between the $p$ taxa by exploiting a matrix
representation of phylogenetic relationships, thus providing a model for covariance
structure.

If we let $Q=LL'$ be a Cholesky decomposition of $Q$ and set $Z:=XL$, then the kernel
$K_Q$ has an SVD of the form $XQX'=VS^2V'$.  This leads to a two-dimensional regression
estimate that takes the same the form as $\hat\beta_{\mbox{\tiny PCR}}$ in
\eqref{eq:betaPCR}. Indeed, we can recover a primal space estimate in terms of singular
vectors as
\begin{equation}\label{eq:betaDPCR}
\hat\beta_{\mbox{\tiny DPCR}} := V_{(2)}S_{(2)}^{-1}U_{(2)}'y =
\sum_{k=1}^2\frac{1}{\sigma_k}u_k'y\,v_k.
\end{equation}
That is, implicit in a DPCoA plot is a coefficient vector, $\hat\beta_{\mbox{\tiny
DPCR}}$ which models a two-dimensional linear association between $y$ and $Z=XL$ in the
same way that $\hat\beta_{\mbox{\tiny PCR}}$  represents a two-dimensional linear
association between $y$ and $X$. Further, $XQX'=(XL)(XL)'$ and so $U$, $S$ and $V$ in
\eqref{eq:betaDPCR} are the same as those in the penalized (non-truncated) estimate,
$\hat\beta_{Q}$, in \eqref{eq:hatBetaQ}.

\subsection{Kernel-based regression with two kernels}\label{sec:KPR}
In addition to similarities among taxa, as in $Q$, it is often of interest to incorporate
similarities among samples as derived, for instance, from UniFrac distances:
$H=-\frac{1}{2}\calJ\Delta^U\calJ$. The symmetric positive definite $n\times n$ kernel
$H$ defines a new inner product on $\bbR^n$ given by $\langle u,w\rangle_H = u'Hw$, with
the corresponding norm $\|u\|^2_{H}=\langle u,u\rangle_H$. If we consider both a general
kernel, $H$, and a DPCoA kernel $K_Q=XQX'$, the generalized ridge estimate $\hat\beta_Q$
in \eqref{eq:hatBetaQ} can be extended to
\begin{equation}\label{eq:hatBetaQH}
\begin{aligned}
\hat\beta_{Q,H} & :=\argmin_{\beta\in\bbR^{p}}\left\{ \|y-X\beta\|_{H}^2 + \lambda\|\beta\|_{Q^{-1}}^2\right\}\\
    & = (X'HX + \lambda Q^{-1})^{-1}X'Hy.
\end{aligned}
\end{equation}

In this section, we show that the estimate in \eqref{eq:hatBetaQH} is directly defined
based on the generalized eigenvectors of the two kernels $Q$ and $H$. Before proceeding
to the general case, let us examine the special case of ridge regression. In this case,
$H = I_n$ and $Q=I_p$. It is well known that ridge estimates can be obtained by solving
an equivalent optimization problem in the dual space $\bbR^n$, known as \emph{kernel
ridge regression} \citep{Scholkopf:2002}. Specifically, taking $K_I=XX'$, the ridge
estimate in \eqref{eq:betaRidgeSVE} can be obtained as $\hat\beta_{\mbox{\tiny
ridge}}=X'\hat\gamma_{\mbox{\tiny kernel ridge}}$, where
\begin{equation}\label{eq:KernelRidge}
\begin{aligned}
\hat\gamma_{\mbox{\tiny kernel ridge}}& =(K_I+\lambda I)^{-1}y = (K_I^2+\lambda K_I)^{-1}K_Iy\\
     &= \argmin_{\gamma\in\bbR^{n}}\{ \|y-K_I\gamma\|^2 + \lambda\|\gamma\|^2_{K_I}\}.
\end{aligned}
\end{equation}

In the case of ridge, the connection between the dual- and primal-space estimates,
$\hat\gamma_{\mbox{\tiny kernel ridge}}$ and $\hat\beta_{\mbox{\tiny ridge}}$, relies on
the form $K_I=XX'$. Unfortunately, it is less clear how to extend this connection to a
general kernel (e.g., UniFrac or polynomial). One way to incorporate a general kernel $K$
and a second kernel $H$ in \eqref{eq:KernelRidge} is to define the penalty in terms of
$H$ as
\begin{equation}\label{eq:K2Hinv}
\hat\gamma_{*} = (K^2+\lambda H^{-1})^{-1}Ky = \argmin_{\gamma}\{ \|y-K\gamma\|^2 + \lambda\|\gamma\|_{H^{-1}}^2\},
\end{equation}
which is exactly the Tikhonov regularization, but in the dual space; compare
eq.~\eqref{eq:hatBetaQ}. However, $\hat\gamma_{*}\in\bbR^n$ has no obvious connection to
a penalized estimate of $\beta\in\bbR^p$ and cannot be used to obtain a penalized
regression estimate in the primal space, even if $K=K_I=XX'$.

To bridge this gap, we instead apply the Franklin regularization scheme
\citep{Franklin:1978}, a little-used alternative to Tikhonov regularization.  More
specifically, for {\em any} kernels $K$ and $H$, we \emph{define} the dual estimate
\begin{equation}\label{eq:Franklin}
\hat\gamma_{H,K} := (K+\lambda H^{-1})^{-1}y  = \argmin_{\gamma\in\bbR^{n}}\{ \|y-K\gamma\|_{K^{-1}}^2 + \lambda\|\gamma\|_{H^{-1}}^2\}
\end{equation}

Comparing \eqref{eq:K2Hinv} and \eqref{eq:Franklin}, one sees that the analytic form of
\eqref{eq:Franklin} involves just $K$ rather than $K^2=K'K$. As shown in
Proposition~\ref{prop:KPR}, this subtle difference is a key for relating
$\hat\gamma_{H,K}$ and its primal-space counterpart. Before presenting the main result of
this section, we provide several equivalent forms of $\hat\gamma_{H,K}$.
\begin{equation}\label{eq:Franklin2}
  \begin{aligned}
\hat\gamma_{H,K} &=(K+\lambda H^{-1})^{-1}y\\
    &= \argmin_{\gamma\in\bbR^{n}}\{ \|y-K\gamma\|_{K^{-1}}^2 + \lambda\|\gamma\|_{H^{-1}}^2\}\\
    &=\argmin_{\gamma\in\bbR^{n}}\{\|y-K\gamma\|_{H}^2+ \lambda \|\gamma\|_{K}^2\}\\
    &=(HK+\lambda I)^{-1}Hy\\
    &=H(KH+\lambda I)^{-1}y.
    \end{aligned}
\end{equation}

In Proposition~\ref{prop:KPR}, we also refer to the special case corresponding to the
DPCoA ordination. As before, let $Z=XL$ so that $K_Q=XQX'=XLL'X'=ZZ'$.  Taking $H=I$, the
dual-space estimate in \eqref{eq:Franklin} is $\hat\gamma_{I,K_Q} = (K_Q+\lambda
I)^{-1}y$, and so the corresponding primal-space estimate is $\hat\beta\equiv
Z'\hat\gamma_{I,K_Q}$. Since this estimate arises from the DPCoA kernel, so we make the
following definition.

\begin{definition}\label{def:betaDPCoA}
A primal space DPCoA estimate is of the form $\hat\beta_{\mbox{\tiny
DPCoA}}=Z'\hat\gamma_{I,K_Q}=L'X'(XQX'+\lambda I)^{-1}y$.
\end{definition}

The next proposition collects several properties that emphasize the roles of $H$ and $K$
in our penalized regression framework. In particular, we show that the primal space
estimate $\hat\beta_{Q,H}$ can be recovered in terms of two kernels, $H$ and $K_Q$.

\begin{proposition}\label{prop:KPR}
Let $H$ and $K$ be any two kernels constructed using the rows of $X$ in the regression
model $y=X\beta+\epsilon$.  Then,
\begin{enumerate}
\item  $\hat\gamma_{H,K}$ is a linear combination of the eigenvectors of the matrix
    product $HK$.

\item For any kernel $H$ and DPCoA kernel $K_Q=XQX'$, then the  primal- and dual-space
    estimates in \eqref{eq:hatBetaQH} and \eqref{eq:Franklin}, respectively,  are
    related as: $\hat\beta_{Q,H} = QX'\hat\gamma_{H,K_Q}$.

\item For $H=I$ and $Q=LL'$, the generalized ridge and DPCoA estimates  are related as
    $\hat\beta_{Q} = QX'(K_Q+\lambda I_n)^{-1}y = L\hat\beta_{\mbox{\tiny DPCoA}}$.



\end{enumerate}
\end{proposition}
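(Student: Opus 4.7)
For part~(1), I would work from the equivalent form $\hat\gamma_{H,K}=(HK+\lambda I)^{-1}Hy$ appearing in~\eqref{eq:Franklin2}. Since $H$ and $K$ are symmetric positive definite kernels, the product $HK$ is similar to the symmetric matrix $H^{1/2}KH^{1/2}$ via conjugation by $H^{1/2}$, hence diagonalizable with positive real eigenvalues. Writing $HK = W\Lambda W^{-1}$ gives $(HK+\lambda I)^{-1} = W(\Lambda+\lambda I)^{-1}W^{-1}$, so
\[
\hat\gamma_{H,K} \;=\; W(\Lambda+\lambda I)^{-1}W^{-1}Hy,
\]
which displays $\hat\gamma_{H,K}$ as a linear combination of the columns of $W$, that is, of the eigenvectors of $HK$.

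For part~(2), the essential step is a push-through identity relating the primal and dual resolvents. I would verify directly that
\[
(X'HX + \lambda Q^{-1})\,QX' \;=\; X'HXQX' + \lambda X' \;=\; X'H\,(XQX' + \lambda H^{-1}),
\]
where the left equality uses $Q^{-1}Q = I_p$ and the right equality uses $HH^{-1} = I_n$. Left-multiplying by $(X'HX+\lambda Q^{-1})^{-1}$ and right-multiplying by $(XQX'+\lambda H^{-1})^{-1}$ rearranges this into $QX'(K_Q+\lambda H^{-1})^{-1} = (X'HX+\lambda Q^{-1})^{-1}X'H$, and applying both sides to $y$ gives $\hat\beta_{Q,H} = QX'\hat\gamma_{H,K_Q}$.

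Part~(3) is then a specialization of part~(2) to $H=I_n$: the identity from part~(2) immediately yields $\hat\beta_Q = QX'(K_Q+\lambda I_n)^{-1}y$, which is the first claimed equality. For the second equality, I would factor $Q = LL'$ on the right, writing $QX'(K_Q+\lambda I_n)^{-1}y = L\cdot\bigl[L'X'(K_Q+\lambda I_n)^{-1}y\bigr]$ and recognizing the bracketed quantity as $\hat\beta_{\mbox{\tiny DPCoA}}$ by Definition~\ref{def:betaDPCoA}.

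The main obstacle is identifying the push-through identity in part~(2); once it is correctly formulated the remainder is routine matrix algebra. The identity works precisely because the Franklin penalty produces a dual resolvent of the form $(K+\lambda H^{-1})^{-1}$, so that $\lambda Q^{-1}\cdot QX'$ on the primal side collapses to $\lambda X'$ and matches the $\lambda H^{-1}$ term on the dual side --- this is the subtle difference from the Tikhonov form~\eqref{eq:K2Hinv}, which involves $K^2$ rather than $K$ and therefore does not admit such a clean primal--dual bridge. If $H$ is only assumed positive semi-definite, a brief limiting argument (replacing $H$ by $H+\epsilon I_n$ and sending $\epsilon\to 0$) would justify writing $H^{-1}$ throughout.
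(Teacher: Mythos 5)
Your proposal is correct, and for parts (2) and (3) it is essentially the paper's own argument: the paper's proof rests on the chain of ``push-through'' identities in \eqref{eq:XQHidentities}, which are exactly what you verify when you expand $(X'HX+\lambda Q^{-1})QX' = X'H(XQX'+\lambda H^{-1})$ and invert on both sides; your diagnosis that the Franklin form $(K+\lambda H^{-1})^{-1}$ rather than the Tikhonov form with $K^2$ is what makes the bridge work matches the paper's own remarks. The only genuine difference is in part (1): the paper works with the simultaneous diagonalization of the pair $(K,H^{-1})$, i.e.\ a matrix $W$ with $W'KW=\diag\{\sigma_k\}$ and $W'H^{-1}W=\diag\{\mu_k\}$, which yields the explicit filter-factor expansion $\hat\gamma_{H,K}=\sum_k \bigl(\sigma_k/(\sigma_k+\lambda\mu_k)\bigr)(w_k'y/\sigma_k)\,w_k$ of \eqref{eq:gammaH_GSV} and only then observes that these generalized eigenvectors are eigenvectors of $HK$; you instead diagonalize $HK$ directly via its similarity to $H^{1/2}KH^{1/2}$. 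Both establish the stated claim (which, once $HK$ is known to be diagonalizable, is nearly automatic since its eigenvectors span $\bbR^n$), but the paper's route buys the explicit shrinkage coefficients along each generalized eigendirection, which it reuses later (e.g.\ in the discussion of co-informative kernels), whereas your route is shorter and makes the diagonalizability of the nonsymmetric product $HK$ explicit. Your closing remark on handling merely positive semi-definite $H$ by a perturbation $H+\epsilon I_n$ is a sensible supplement the paper does not spell out.
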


\begin{proof}  These relationships are based on several basic linear algebraic
properties.  In particular, we make use of the following identities:
\begin{equation}\label{eq:XQHidentities}
\begin{aligned}
&(X'HX + \lambda Q^{-1})^{-1}X'H  =(QX'HX+\lambda I_p)^{-1}QX'H \\
                              & = Q(X'HXQ+\lambda I_p)^{-1}X'H
                              = QX'(XQX'+\lambda H^{-1})^{-1} \\
                             & = QX'H (XQX'H+\lambda I_n)^{-1}.
 \end{aligned}
\end{equation}

\begin{enumerate}
\item Let $w_1,...,w_n$ denote the columns of the matrix $W$ satisfying $W'KW =
    \diag\{\sigma_1,...,\sigma_n\}$ and $W'H^{-1}W = \diag\{\mu_1,...,\mu_n\}$. That
    is, $\{w_k\}$ is a basis with respect to which a simultaneous diagonalization of
    $K$ and $H^{-1}$ is obtained (see \citet{GolubVanLoan:2012,HansenBook:1998}). These
    are the generalized eigenvectors of the pair $(K,H^{-1})$. Then
\begin{equation}\label{eq:gammaH_GSV}
\hat\gamma_{H,K} = \sum_{k=1}^n \left(\frac{\sigma_k}{\sigma_k+\lambda\mu_k}\right)     
 \frac{w_k'y}{\sigma_k}  w_k.
\end{equation}
Since $H$ is invertible, the $w_k$'s are also eigenvectors of $HK$
\citep{GolubVanLoan:2012}.
\item Using the identities in \eqref{eq:XQHidentities}, the estimate in
    \eqref{eq:hatBetaQH} can be expressed as
\begin{equation}\label{eq:hatBetaQH2}
\begin{aligned}
 \hat\beta_{Q,H} & = Q(X'HXQ+\lambda I_p)^{-1}X'Hy \\
                & = QX'(XQX'+\lambda H^{-1})^{-1}y \\
                & = QX'(K_Q+\lambda H^{-1})^{-1}y = QX'\hat\gamma_{H,K_Q}.
\end{aligned}
\end{equation}

\item Setting $H=I$ in the first line of the above equalities gives $\hat\beta_{Q} =
    QX'(XQX'+\lambda I_n)^{-1}y$.  Using $Q=LL'$ gives
    $\hat\beta_{Q}=L\,L'X'(K_Q+\lambda I_n)^{-1}y = L\,Z'(ZZ'+\lambda I)^{-1}y=
    L\hat\beta_{\mbox{\tiny DPCoA}}$.
\end{enumerate}
\end{proof}

\noindent\textbf{Remarks:}
\begin{enumerate}\renewcommand{\labelenumi}{\Alph{enumi})}
\item \textbf{Types of similarity kernels.} In general, a sufficient condition for a
    matrix $K$ to be a similarity kernel is that it is induced by a feature map
    $\phi\colon \bbR^p \to \mathcal{K}$. More specifically, the $i,j$ entry of $K$ is
    defined as the inner product of the observations $x_i\in \Bbb{R}^p$ with respect to
    their transformed versions $K_{ij}=\langle\phi(x_i),\phi(x_j)\rangle$ in the new
    inner product space, $(\mathcal{K},\langle\cdot,\cdot\rangle)$. Examples include
    $K_I=XX'$ or $K_Q=XQX'$, where $\mathcal{K}$ is $\bbR^p$ with inner product
    $\langle\cdot,\cdot\rangle_Q$ (as in DPCoA). It is this quadratic form that we
    require for $K_Q$ in Proposition~\ref{prop:KPR}(2)--(3); see \citet{Freytag:2014}
    for genomic applications of this form. On the other hand, $H$ can be any symmetric
    positive semi-definite matrix. Here, we are more interested in
    biologically-motivated kernels, such as UniFrac or DPCoA, than
    mathematically-derived ones, such as those constructed from polynomials or radial
    basis functions \citep{Scholkopf:2002}.

\item \textbf{Co-informative kernels and the HSIC.} Any kernels $K$ and $H$ may be used
    in \eqref{eq:Franklin} and \eqref{eq:Franklin2}, but to be useful in this
    framework, we assume that they are ``co-informative" in the sense that they exhibit
    a shared eigenstructure; for instance, both should be informative for classifying
    samples. This concept is illustrated in the simulation of Section~\ref{ex:EdgePCA}
    and Figure~\ref{fig:EPCA}. The co-informativeness can be made precise using the
    Hilbert-Schmidt information criteria (HSIC) \citep{Gretton:2005} 
    or its relatives the distance covariance \citep{Szekely:2009} 
    or RV statistic \citep{Robert-EscoufierRV:1976}. 
    \citet{JosseHolmes:2013} provide a nice review of these and related kernel-based
    tests. The HSIC provides a test for the statistical dependence of two data sets,
    $X_1$ ($n\times p$) and $X_2$ ($n\times q$), and is based on the eigen-spectrum of
    covariance operators defined by kernels defined by $X_1$ and $X_2$, respectively.
    For two kernels $K$ and $H$, the empirical HSIC is simply $\trace(HK)$. The HSIC is
    thus of particular interest in item (1) of Proposition~\ref{prop:KPR}, which shows
    how two co-informative kernels may be used to obtain a penalized estimate
    $\hat\beta_{Q,H}$.

\item \textbf{Linear mixed models and KPR.} As an alternative to the regularization
    framework presented here, it may be useful to consider a kernel as a generalized
    covariance among either the $p$ variables (using $Q$) or $n$ subjects (using $H$)
    \citep{Purdom:2011,Schaid:2010a}. This alternative representation can be made
    precise using the linear mixed model (LMM) framework \citep{RuppertWandCarroll:03}.
    Specifically, recall from equations \eqref{eq:hatBetaQH} and \eqref{eq:Franklin2}
    that
\begin{equation*}
\begin{aligned}
\hat\beta_{Q,H} & = \argmin_{\beta\in\bbR^{p}}\{ \|y-X\beta\|_{H}^2 + \lambda\|\beta\|_{Q^{-1}}^2\}\\
                & = QX'(K_Q+\lambda H^{-1})^{-1}y \\
                & = QX'\argmin_{\gamma\in\bbR^{n}}\{ \|y-K_Q\gamma\|_{H}^2 + \lambda\|\gamma\|_{K_Q}^2\}
                 = QX'\hat\gamma_{H,K_Q}.
\end{aligned}
\end{equation*}
These regression estimates are compatible with $\beta\sim N(0,\sigma_b^2 Q)$, \
$\epsilon\sim N(0,\sigma_e^2 H^{-1})$ and $var(y) = (\tau K_Q + \lambda H^{-1})^{-1}$.
And the estimate $\hat\gamma_{H,K_Q}$ is compatible with $\gamma\sim N(0,\sigma_a^2
K_Q^{-1})$ and $\epsilon\sim N(0,\sigma_e^2 H^{-1})$. With regard to the latter, a
genetic similarity between subjects (e.g., kinship) is often used for grouping subjects
and several authors have proposed this form of kernel for testing the (global) genetic
association with a trait or phenotype, $y$; see, e.g., \citet{Schifano:2012}. 
In particular, these methods use the LMM framework to motivate and define a ``kernel
association test".  The variance score statistic for testing the null hypothesis of no
association between $y$ and $X$ ($H_0: \beta=0$) is, using our notation above,
$\mathcal{T} := \|y\|_{H^{1/2}K_QH^{1/2}}^2$. The kernel association testing framework
has been applied to microbiome data using a single kernel at a time derived from
Unifrac \citep{NiZhao:2015}, but this is a test for whether $\beta\ne 0$ and, unlike
our KPR framework, provides no insight about which taxa, as represented by coordinates
of $\beta$, are associated with $y$.
\end{enumerate}

\subsection{Regression with compositional data}\label{sec:compdata}
Data from 16S rRNA gene sequencing methods are random counts of the molecules in each
sample. The number of sequence reads assigned to a taxon contains no information about
the {\em actual} number of molecules in the sample; the total number of reads observed in
two samples can vary by several orders of magnitude. Hence, only \emph{relative} amounts
can be investigated. Common approaches for normalizing these data include converting them
to proportions (relative percent) or subsampling the sequences to create equal library
sizes for each sample (rarefying). These data are ``compositional" in the sense that the
microbial abundances represent a proportion of a constant total. It is well known,
however, that compositional measures can result in spurious correlations among taxa
\citep{Pearson:1896,Aitchison:2003,FriedmanAlm:2012}, an effect that can be quite extreme
when there are a few dominant taxa.

Compositional data reside on the \emph{simplex} $\bbS^{p-1}$ of unit-sum vectors in
$\bbR^p$ and so standard multivariate methods do not apply
\citep{AitchisonBook:2003,Egozcue-Chap2:2011,Li:2014,Lovell:2015}. In particular, because
these data do not naturally reside in a Euclidean vector space, standard regression
models based on Euclidean covariance measures are inappropriate. However, ordinary
least-squares and ridge regression estimates are of the form $\hat\beta= ( X'X + \lambda
I)^{-1} X' y$ (with $\lambda = 0$ and $\lambda > 0$, respectively). Thus, these estimates
depend on the empirical covariance structure, $X'X$, among taxa, which may include
spurious correlations. Similarly, \citet{Li:2014} points out that a na\"ive application
of lasso regression is not expected to perform well due to the compositional nature of
the covariates. He addresses this issue by applying a lasso regression model to the
log-ratio abundances and imposing an additional constant-sum constraint on the
coefficient vector, $\beta$.

We next show that the generality of KPR for handling non-Euclidean structures can be used
to address the compositional nature of microbiome data. In particular, we propose an
approach that uses the centered log-ratio transformation of the compositional vectors and
an estimate of covariance among the log taxa counts that is obtained via Aitchison's
variance matrix \citep{AitchisonBook:2003,Egozcue-Chap2:2011}.

Let $X$ be the $n\times p$ sample-by-taxon matrix whose rows are relative percent
(compositional) vectors $\{x_i\}_{i=1}^n\subset\bbS^{p-1}$. The columns of $X$ will be
denoted by $x^k$, corresponding to $k=1,...,p$ taxa.  Let $g(z) =\left(\Pi_{k=1}^p
z^k\right)^{1/p}$ be the geometric mean of a row vector, $z$, and denote the centered
log-ratio (CLR) transform of $x_i$ by $\tilde x_i=\clr(x_i):=
\left[\log\frac{x_i^1}{g(x_i)},\ldots,\log\frac{x_i^p}{g(x_i)}\right]$. In what follows
we denote the matrix of CLR vectors by $\tilde X$, and use the normalized variation
matrix $T$, of $X$, as defined by \citet{Aitchison:1982}: $T_{k,\ell}=
\var\left(\frac{1}{\sqrt{2}}\log\frac{x^k}{x^\ell}\right)$.  $T$ is a symmetric
dissimilarity matrix with zeros on the diagonal and entries that have squared Aitchison
distance units: the Aitchison norm of a vector $x\in\bbS^{p-1}$ is defined as $\|x\|_a^2
= \frac{1}{2p}\sum_{k,\ell}\left(\log \frac{x^k}{x^\ell} \right)^2$. In fact, $\|x\|_a^2
= \|\clr(x)\|^2$.  One can show that $T$ is related to the covariance matrix, $C$ of the
log of the true unobserved taxa counts via $T = v\boldsymbol{1}' + \boldsymbol{1}v' - 2C$
\citep{Li:2014}. Consequently, $C = -\frac{1}{2}\mathcal{J}T\mathcal{J}$, and we can use
$C$ in place of $Q$ in eq.~\eqref{eq:hatBetaQ} to obtain
\begin{equation}\label{eq:hatBetaC}
 \tilde\beta_C =\argmin_{\beta}\left\{ \|y-\tilde{X}\beta\|_{\bbR^n}^2 + \lambda\|\beta\|_{C^{-1}}^2\right\}.
\end{equation}

As a comparison, we observe that \citet{Li:2014} proposed a constrained regression
\begin{equation}\label{eq:LiConstrainReg}
\E(y_i) = \beta_1\log x_i^1 +\dots + \beta_p\log x_i^p \quad \text{ subject to } \sum_{j=1}^p\beta_j=0,
\end{equation}
augmented with a lasso penalty to obtain an estimate of the form
\begin{equation*}
\argmin_{\beta} \left\{\frac{1}{2n} \| y- \sum_{j} \log(x^j) \beta_j \|_{\bbR^n}^2 + \lambda \sum_j | \beta_j | \right\}  \quad \mbox{ subject to } \sum_{j=1}^p\beta_j=0.
\end{equation*}
The zero-sum constraint on $\beta$ was emphasized for interpretability advantages over
the standard lasso estimate. Temporarily denoting $\beta_p=-\sum_{j = 1}^{p-1}\beta_j$,
we see that \eqref{eq:LiConstrainReg} is equivalent to
\begin{align*}
\E(y_i) &=\beta_1\log\frac{x_i^1}{x_i^p} + \beta_2\log\frac{x_i^2}{x_i^p} + \dots + \beta_{p-1}\log\frac{x_i^{p-1}}{x_i^p} \\
&=\beta_1\log x_i^1 + \beta_2\log x_i^2  + \dots + \beta_{p-1}\log x_i^{p-1}  -\sum_{j = 1}^{p-1}\beta_j \cdot\log x_i^p.
\end{align*}
Since $\sum_{j = 1}^{p}\beta_j=0$, this can be rewritten as
\begin{align*}
\E(y_i) &= \beta_1\log x_i^1 +\dots +\beta_p \log x_i^p - (\sum_{j = 1}^p \beta_j)  \log g(x_i)\\
&=\beta_1\log \frac{x_i^1}{g(x_i)} + \dots + \beta_p\log\frac{x_i^p}{g(x_i)}  \qquad \text{ subject to } \sum_{j=1}^p\beta_j=0.
\end{align*}
Therefore, Li's proposal of regression on log-ratio abundances is equivalent to
regression on the CLR-transformed data $\tilde{X}$ provided a zero-sum constraint is
imposed on $\beta$. In contrast, however, our formulation does not explicitly impose a
constant-sum constraint. In fact, this constraint is not needed because the CLR transform
removes the analysis from the simplex to allow an analysis in Euclidean vector space
algebra \citep{Egozcue-Chap2:2011}. Our model instead incorporates the appropriate
covariance structure for the CLR transformation, $C$.

As a final observation, we note that a positive-definite $C$ in \eqref{eq:hatBetaC}, or
more generally $Q$ in \eqref{eq:hatBetaQ}, can be decomposed as a sum $Q=I+\tilde Q$ of
the identity plus a positive semi-definite singular matrix $\tilde Q$. The identity term
constrains $\sum_{j=1}^p\beta^2_j$ to be small while, overall, $\tilde Q$ encourages
extrinsic structure (e.g., smoothness). One may also control the size of
$\sum_{j=1}^p\beta_j^2$ by adding or subtracting values in the diagonal entries of $Q$.
This idea is similar to that of ``Grace-ridge" in \citet{ZhaoShojaie:2015} where, in
addition to the penalty induced by $Q$, the authors propose to further impose a
ridge-type penalty in the objective.  We apply the significance testing framework of
\citet{ZhaoShojaie:2015} in Section~\ref{sec:Applic}.

\section{Numerical Experiments}\label{sec:Examples}
To illustrate the proposed framework, we perform several data-driven simulations using
publicly available microbiome data.  We consider three scenarios from the literature that
exploit extrinsic structure from a phylogenetic tree, including DPCoA, UniFrac and edge
PCA. To achieve realistic simulations, we simulate ``true" signals of the type implied by
each of these methods in order to create benchmarks for performance evaluation. Our
emphasis is on formalizing the role that such structure plays in penalized regression
when modeling associations between the multivariate data, $X$, and a response variable,
$y$. Since $y$ is directly simulated from $X$ in these settings, the compositional nature
of the data discussed in Section~\ref{sec:compdata} does not affect the simulation
results. We will return to this topic when analyzing the relative abundance data in
Section~\ref{sec:Applic}.

The numerical experiments in this section are motivated by the relationship between the
PCoA plots and PCR described in Section~\ref{sec:PCoA-PCR} and
Figure~\ref{fig:PCoA-UniFrac-age}(b). This connection can be generalized to a number of
other commonly-used graphical representations in the microbiome literature. For instance,
any two-dimensional DPCoA plot involves an implicit coefficient vector, $\beta$, of
associations between $y$ and $X$.

Throughout this section, we compare the performance of KPR with ridge regression and
lasso. Ridge regression provides a direct extension of ordinary least squares and thus is
a natural benchmark for comparing various KPR estimates. Lasso, which gives sparse
estimates, is used as a benchmark in settings where the true $\beta$ is sparsely
non-zero. The choice of competing methods is limited by our emphasis on {\em estimating}
$\beta$, rather than predicting the outcome $y$.  Indeed, most kernel methods focus on
prediction which renders them inappropriate for comparison.

In all simulation experiments, the tuning parameters for KPR, ridge and lasso are chosen
using 10-fold cross-validation. Specifically, to compare the prediction performance of
KPR, ridge and lasso, we choose the tuning parameters that minimize squared test error in
held out cross validation samples (CV min). On the other hand, the task of estimation
usually requires more smoothing than prediction \citep{CaiHall:2006}. Therefore, when
examining the estimation performances of KPR, ridge and lasso, we use the largest tuning
parameters such that the squared test errors are within one standard error of the minimum
squared test error (CV 1se), as suggested in \citet{HasTibFri:2009}. For comparison, we
also consider the tuning parameters corresponding to the minimum squared test error for
ridge and lasso.

\subsection{Regression and DPCoA}\label{ex:DPCoA}
In our first example, we compare the estimation and prediction performances of KPR, ridge
and lasso using the data depicted in Figure~\ref{fig:PCoA-UniFrac-age}. The rows of $X$
represent relative abundances of $p = 149$ taxa from $n=100$ subjects in a study by
\citet{Yatsunenko:2012}. The outcome $y$ is log-transformed age of each subject. For KPR,
we use $K_Q=XQX'$ and $H=I$, where $Q=-\frac{1}{2}\calJ\delta\calJ$ is a matrix of
similarities between taxa obtained from the matrix of squared patristic distances,
$\delta$. Motivated by DPCoA plots, we assume the underlying ``true'' response
$y_{\mbox{\tiny True}}$ is generated from the first two eigenvectors of $K_Q$. Let $L$ be
the Cholesky factor of $Q$, i.e., $Q=LL'$, and let $XL=U^{L}S^{L}(V^{L})'$. Recall that
$A_{(k)}$ denotes the first $k$ columns of matrix $A$, or its first $k$ rows and columns
if $A$ is diagonal. Motivated by \eqref{eq:betaDPCR}, we let
\begin{equation}\label{eq:truebetaDPCoA}
  \beta_{\mbox{\tiny True}}= s\left(V^{L}_{(2)}(S^{L}_{(2)})^{-1}(U^{L}_{(2)})'y, \tau \right),
\end{equation}
where, $s(\cdot,\tau)$ is the hard-thresholding operator, i.e., $s(x, \tau) =
x\cdot1(|x|>\tau)$. The threshold $\tau\geq 0$ is set to achieve various levels of
sparsity:  $\|\beta_{\mbox{\tiny True}}\|_0\in\{\lfloor0.2p\rfloor, \lfloor0.6p\rfloor,
p\}$. After generating $\beta_{\mbox{\tiny True}}$, we simulate
\begin{equation*}
y_{\mbox{\tiny True}}=U^L_{(2)}S^L_{(2)}(V^L_{(2)})'\beta_{\mbox{\tiny True}}.
\end{equation*}


The simulation is repeated 500 times, each with a different $\epsilon\sim N_n(0,
\sigma^2_\epsilon I_n)$ in $y_{obs} = y_{\mbox{\tiny True}} + \epsilon$. Further,
$\sigma^2_\epsilon$ is set to achieve $R^2 = \var(y_{\mbox{\tiny True}}) /
(\var(y_{\mbox{\tiny True}}) + \sigma^2_\epsilon)\in\{0.1,0.2,...,0.9\}$. In each
repetition, we estimate $\hat\beta_{\mbox{\tiny DPCoA}}$ from $y_{obs}$ according to
definition \ref{def:betaDPCoA}. To make the simulation more realistic, we do not assume
we always observe the $Q$ matrix used to simulate $\beta_{\mbox{\tiny True}}$ and
$y_{\mbox{\tiny True}}$. Rather, to estimate $\hat\beta_{\mbox{\tiny DPCoA}}$, we use
$Q_{obs}$, which is obtained by adding random Gaussian noise to $Q$. Eigenvalues of
$Q_{obs}$ are adjusted to be equal to the eigenvalues of $Q$.
The amount of Gaussian noise added to the entries of $Q_{obs}$ is empirically determined
to achieve $\|Q-Q_{obs}\|_F/\|Q\|_F\in\{0, 0.25, 0.5\}$. As a comparison, we estimate
$\hat\beta_{\mbox{\tiny Ridge}}$ and $\hat\beta_{\mbox{\tiny Lasso}}$ using only $X$ and
$y_{obs}$, without incorporating $Q$. From the estimated coefficients, we compute $\hat
y_{\mbox{\tiny DPCoA}}=XL_{obs}\hat\beta_{\mbox{\tiny DPCoA}}$, $\hat y_{\mbox{\tiny
Ridge}}=X\hat\beta_{\mbox{\tiny Ridge}}$ and $\hat y_{\mbox{\tiny
Lasso}}=X\hat\beta_{\mbox{\tiny Lasso}}$.
The performance metrics are the prediction sum of squared error (PSSE) from
$y_{\mbox{\tiny True}}$ and estimation sum squared error (ESSE) from $\beta_{\mbox{\tiny
True}}$.

\begin{figure}
\vspace{-5pt}
\centering
{\includegraphics[height = 17 cm]{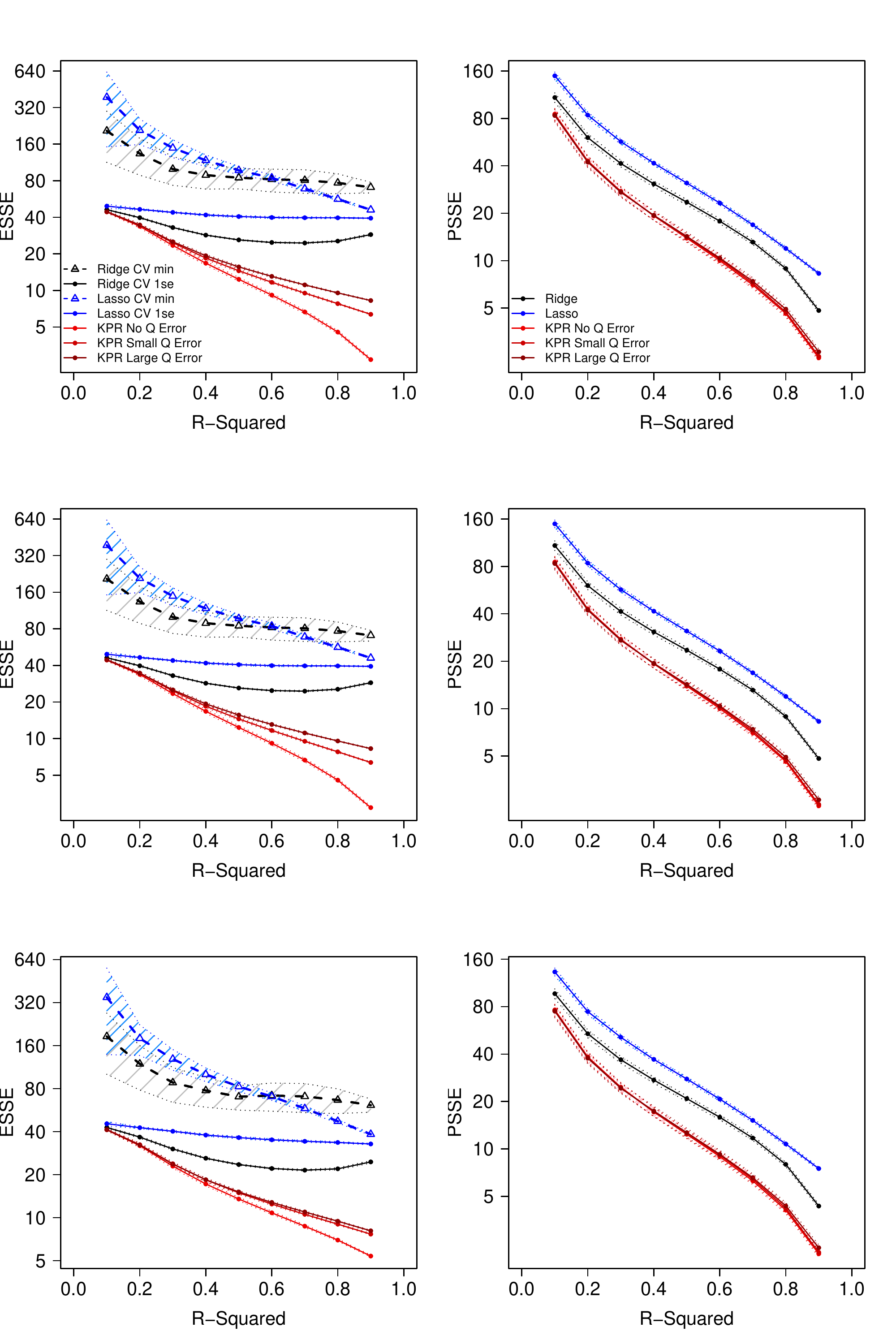}}
\vspace{-5pt}\caption{\small Estimation sum squared error (ESSE: left panels) and prediction
sum squared errors (PSSE: right panels) of KPR (red), ridge regression (black) and lasso (blue),
and their 95\% confidence bands. We consider three sparsity settings for
$\beta_{\mbox{\tiny True}}$, based on \eqref{eq:truebetaDPCoA}: $\|\beta_{\mbox{\tiny True}}\|_0 = p$
in top panels; $\|\beta_{\mbox{\tiny True}}\|_0 = \lfloor0.6p\rfloor$ in center panels, and
$\|\beta_{\mbox{\tiny True}}\|_0 = \lfloor0.2p\rfloor$ in bottom panels.
For ridge and lasso, tuning parameters that produce the smallest cross-validated squared test
error (CV min), and the largest tuning parameters such that the cross-validated squared test
errors are within one standard error of the minimum cross-validated squared test error (CV 1se)
are considered. For KPR, we consider $\|Q-Q_{obs}\|_F/\|Q\|_F = 0$ (no $Q$ error), $0.25$
(small $Q$ error) and $0.5$ (large $Q$ error). }
\label{fig:DPCoASim}
\end{figure}

Figure~\ref{fig:DPCoASim} shows the estimation and prediction performance of KPR, ridge
and lasso. KPR significantly outperforms both ridge regression and lasso for both
prediction and estimation in all settings. As expected, the performance of ridge and
lasso for \emph{estimation} improve when using a larger tuning parameter. On the other
hand, neither mis-specification of $Q$ nor sparsity of $\beta_{\mbox{\tiny True}}$ seems
to substantially impact the relative performance of the three methods. This may be due to
the fact that KPR estimates the correct target $\beta_{\mbox{\tiny True}}$, even with
mis-specified $Q$, whereas ridge regression and lasso estimate the wrong target.

\subsection{Regression and PCoA with respect to a UniFrac kernel}\label{ex:UniFrac}
In the case of PCoA with respect to a UniFrac matrix $\Delta^U$ of squared
dissimilarities, the graphical displays are based on the eigen-decomposition of
$H=-\frac{1}{2}\calJ \Delta^U\calJ$.  That is, for $H = U^H (S^H)^2 (U^H)' \approx
U^{H}_{(2)}(S^{H}_{(2)})^2(U^{H}_{(2)})'$, the $n$ samples are represented in two
dimensions by the columns of $U^{H}_{(2)}S^{H}_{(2)}$;  this results in points
$\{\eta_{i1}, \eta_{i2}\}_{i=1}^n :=\{(\sigma_1 U^{H}_{i1}, \sigma_2
U^{H}_{i2})\}_{i=1}^n$, as plotted in Figure~\ref{fig:PCoA-UniFrac-age}. When the points
are colored according to a response variable, $\{y_i\}_{i=1}^n$, the implied regression
model is
\begin{equation}\label{eq:KPCR-model}
  \begin{aligned}
     y &= \gamma_1\eta_{1} + \gamma_2\eta_{2}+\epsilon \\
       &= U^H_{(2)}S^H_{(2)}\,\gamma + \epsilon.
  \end{aligned}
\end{equation}
However, in contrast to PCR in eq.~\eqref{eq:gammaPCR}, where $US=XV$, it is not obvious
how to connect $\gamma$ directly to the $p$-coordinates corresponding to the $p$ columns
of $X$.
Here, we exploit the joint eigenstructure of kernels $K_I$ and $H$ (see
\eqref{eq:gammaH_GSV}) by proceeding as in \eqref{eq:Franklin2} to obtain the estimate
$\hat\beta_{H}=X'\hat\gamma$ as in \eqref{eq:hatBetaQH2}, with $Q=I$.

In this example, we use the same data as in Section \ref{ex:DPCoA}. For KPR, we use
$K=XX'$ and obtain $H=-\frac{1}{2}\calJ \Delta^U\calJ$ using the UniFrac distance matrix.
We simulate $\gamma_{\mbox{\tiny True}}$ and $y_{\mbox{\tiny True}}$ from the first two
eigenvectors of $H$, as in \eqref{eq:KPCR-model}:
\begin{align*}
\gamma_{\mbox{\tiny True}}&=\left((U^{H}_{(2)})'(S^{H}_{(2)})^2U^{H}_{(2)}\right)^{-1}S^{H}_{(2)}(U^{H}_{(2)})'y \\
y_{\mbox{\tiny True}}&=U^{H}_{(2)}S^{H}_{(2)}\gamma_{\mbox{\tiny True}}.
\end{align*}
This bivariate regression is illustrated in Figure~\ref{fig:PCoA-UniFrac-age}(b).

The simulation is repeated 500 times, each with a different $\epsilon\sim N_n(0,
\sigma^2_\epsilon I_n)$ to produce various values of $R^2\in\{0.1,0.2,\ldots,0.9\}$. We
compute $\hat y_{\mbox{\tiny KPR}}=K\hat\gamma_{\mbox{\tiny KPR}}$, where
$\hat\gamma_{\mbox{\tiny KPR}}$ is estimated using \eqref{eq:Franklin}. Similar to the
last example, we do not assume we always observe the $H$ matrix that is used to generate
$\gamma_{\mbox{\tiny True}}$ and $y_{\mbox{\tiny True}}$; rather, we use a noisy version,
$H_{obs}$, of $H$ in KPR with $\|H-H_{obs}\|_F/\|H\|_F\in\{0, 0.25, 0.5\}$.

Since there is no meaningful way to simulate $\beta_{\mbox{\tiny True}}$, we do not
compare the methods based on their estimation performances, and only consider prediction.
For all three methods, we find the tuning parameters that minimize the cross-validated
$H_{obs}$-weighted squared test error. While the use of $H$ in tuning ridge and lasso
penalties deviates from the common practice, it results in improved performances, given
the important role of $H$ in this simulation. The $H$ matrix also defines the valid
distance in this example. Thus, to evaluate the prediction performances of various
methods, we use the $H$-weighted prediction sum of squared error (HPSSE), $\|\hat
y-y_{\mbox{\tiny True}}\|^2_H$.

Figure~\ref{fig:UniFracSim} shows that KPR consistently outperforms ridge regression and
lasso in prediction, even with a reasonable amount of misspecification of $H$. This may
be due to the fact that, with the incorporation of the $H$ matrix, KPR estimates the
correct target whereas ridge and lasso do not.

\begin{figure}
\centering
{\includegraphics[height = 6.5 cm]{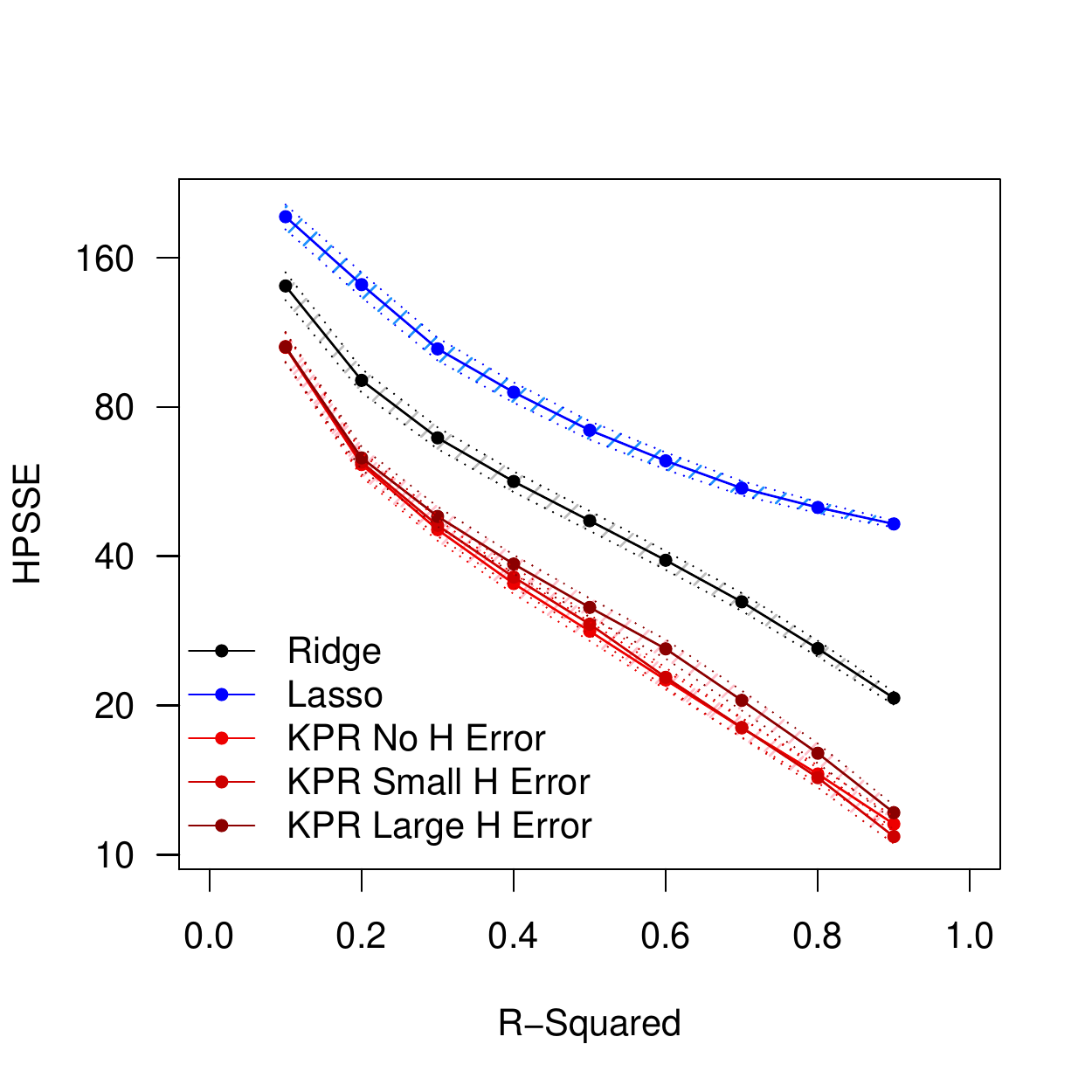}}
\vspace{-7pt}\caption{\small $H$-weighted prediction sum of squared error (HPSSE) of KPR (red), ridge (black) and lasso (blue), with 95\% confidence bands. For KPR, we consider
$\|H-H_{obs}\|_F/\|H\|_F = 0$ (no $H$ Error), $0.25$ (small $H$ Error) and $0.5$ (large $H$ Error). }
\label{fig:UniFracSim}
\end{figure}

\begin{figure}
\centering
{\includegraphics[width=.57\textwidth]{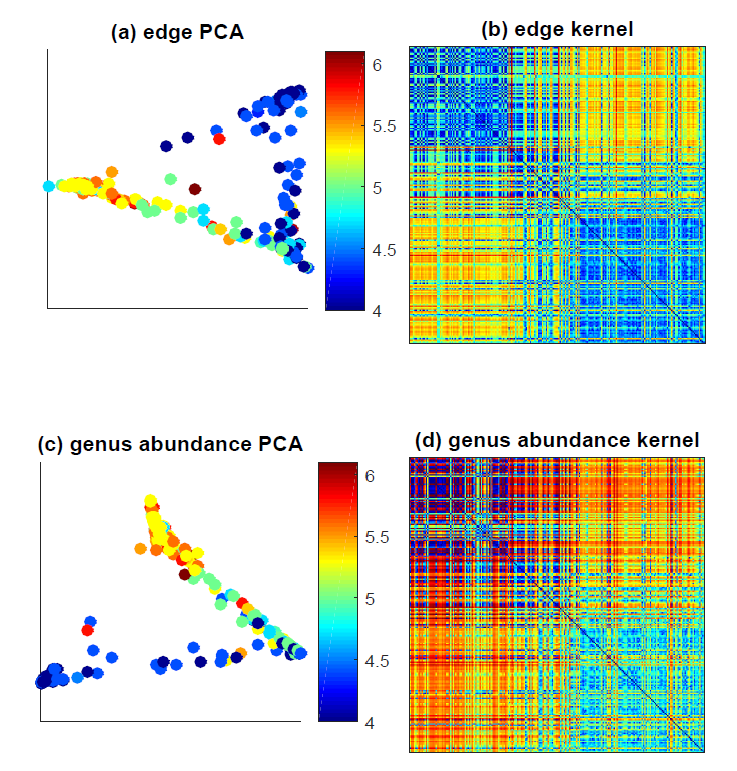}}
\vspace{-7pt}\caption{\small \textbf{Analysis of bacterial vaginosis data from \citet{Srinivasan:2012}.}
\textbf{(a)}: representation of the samples in the space of the first two PC's
    of the edge-matrix kernel $H=EE'$. The color of each point corresponds to the pH
    level of the sample;
\textbf{(b)}: heatmap of edge-matrix kernel used to generate the plot in (a);
\textbf{(c)}: two-dimensional PCA plot based on the genus-level relative abundances;
\textbf{(d)}: heatmap of the genus-abundance kernel $K=XX'$ used to create the plot in (c).
    In (b) and (d), subjects are ordered by the pH values.
}
\label{fig:EPCA}
\end{figure}

\subsection{Regression and PCoA using an edge-matrix kernel}\label{ex:EdgePCA}
In this section, simulations are based on data from a study of bacterial vaginosis (BV)
by \citet{Srinivasan:2012} in which 16S rRNA gene samples were collected using vaginal
swabs from $n=220$ women with and without BV. Here, the outcome $y$ represents pH
measured from vaginal fluid of each subject and we consider the association of $y$ with
genus-level taxa. In this example, we use the $p=62$ genera that exhibit non-zero
sequence counts in at least 20\% of the subjects.  So here, $X$ represents $220 \times
62$ abundances in a sample-by-genus matrix, and we use a kernel $K=XX'$. Additionally,
however, we define a second kernel $H=EE'$ based on the ``edge mass difference matrix",
$E$, originally introduced by \citet{Matsen:2013}. If the full phylogenetic tree has $q$
edges, each sample can be represented by a vector indexed by all $q$ edges, the
$e^{\text{th}}$ coordinate of which quantifies the difference between the fraction of
sequence reads on either side of the edge; i.e., the fraction of reads observed on the
root side of the tree minus the fraction of reads on the non-root side. We refer
to~\citet{Matsen:2013} for details and a discussion of ``edge PCA", which refers to PCA
applied to the $n\times q$ matrix $E$. Note, in particular, that abundances from
\emph{every} taxon level in the tree contribute to a similarity between subjects as
opposed to abundances at a single taxon level, which is used in UniFrac or DPCoA.

In summary, $X$ represents $p=62$ genus-level abundances while $E$ is based on all
$q=1770$ edges in the original phylogenetic tree. Figure~\ref{fig:EPCA}(a) shows a PCA
plot of the 220 subjects in which their similarity is defined using the edge kernel
$H=EE'$; the color of each dot represents the subject's pH.  Figure~\ref{fig:EPCA}(b) is
a heatmap of the kernel $H$ used to create Figure~\ref{fig:EPCA}(a). The columns and rows
of $H$ represent similarities between samples based on the edge mass difference matrix,
ordered by subject pH measurement. Similarly, Figure~\ref{fig:EPCA}(c) is a PCA plot
based on similarities defined using the genus-level abundance kernel, $K=XX'$.
Figure~\ref{fig:EPCA}(d) is a heatmap of the kernel $K$ used to create
Figure~\ref{fig:EPCA}(c), and subjects are again ordered by pH. These figures illustrate
how two different measures of similarity (two separate kernels) may be co-informative in
the sense that they both provide information about grouping of subjects' microbiota in
relation to their pH. It is thus natural to expect that incorporating information from
both $H$ and $K$ within the KPR framework may result in improved estimates of association
between $y=\mbox{pH}$ and the taxon abundances.

For the simulation, we define a ``true" association between pH and the genus-level taxa
in $X$ using the 2-dimensional PCR model in eq.~\eqref{eq:gammaPCR} and
\eqref{eq:betaPCR}. Specifically, we use the apparent association between $y=\mbox{pH}$
and genus-level abundances in Figure~\ref{fig:EPCA}(c) to construct a ``true" coefficient
vector $\beta_{\mbox{\tiny True}}$ as follows.  Using the SVD of $X$, $X=USV'$, set
\begin{equation*}
y_{\mbox{\tiny True}}=U_{(2)}S_{(2)}\left(U'_{(2)}S_{(2)}^2U_{(2)}\right)^{-1}S_{(2)}U'_{(2)}y;
\end{equation*}
then project $y_{\mbox{\tiny True}}$ onto the space spanned by the first two singular vectors
\begin{equation*}
\beta_{\mbox{\tiny True}}=V_{(2)}S_{(2)}^{-1}U_{(2)}'y_{\mbox{\tiny True}}.
\end{equation*}

Taking $H=EE'$ in a KPR model of the form \eqref{eq:Franklin}, we compare the resulting
estimate of $\beta$ with ridge and lasso estimates. (Note that $y_{\mbox{\tiny True}}$
and $\beta_{\mbox{\tiny True}}$ are not informed by $E$.)  The simulation is repeated 500
times, each with a different $\epsilon\sim N_n(0, \sigma^2_\epsilon I_n)$ to produce
various values of $R^2\in\{0.1,0.2,...,0.9\}$. The performance metrics are the estimation
sum squared error (ESSE) the $H$-weighted prediction sum squared error (HPSSE) as in the
previous section. In this numerical example, we do not assume we always observe the true
$H$ matrix; rather, we use a noisy version, $H_{obs}$, of $H$ in KPR with
$\|H-H_{obs}\|_F/\|H\|_F\in\{0, 0.25, 0.5\}$.
For all three methods, tuning parameter values are chosen to minimize the sum of squared
test error weighted by $H_{obs}$. As in the simulation for DPCoA, we also allow for using
the largest tuning parameters such that the squared test error weighted by $H$ is within
one standard error of the minimum squared test error.

Figure~\ref{fig:EPCASim} shows that KPR significantly outperforms ridge and lasso in both
prediction and estimation. Note that even though $H$ is not used to simulate the true
association, the use of edge kernel in KPR enhances the performance of both estimation
and prediction, as long as $H$ is not severely misspecified. Once again, the performance
of \emph{estimates} from ridge and lasso improve when using a larger tuning parameter (CV
1se).

\begin{figure}
\centering
{\includegraphics[height = 6 cm]{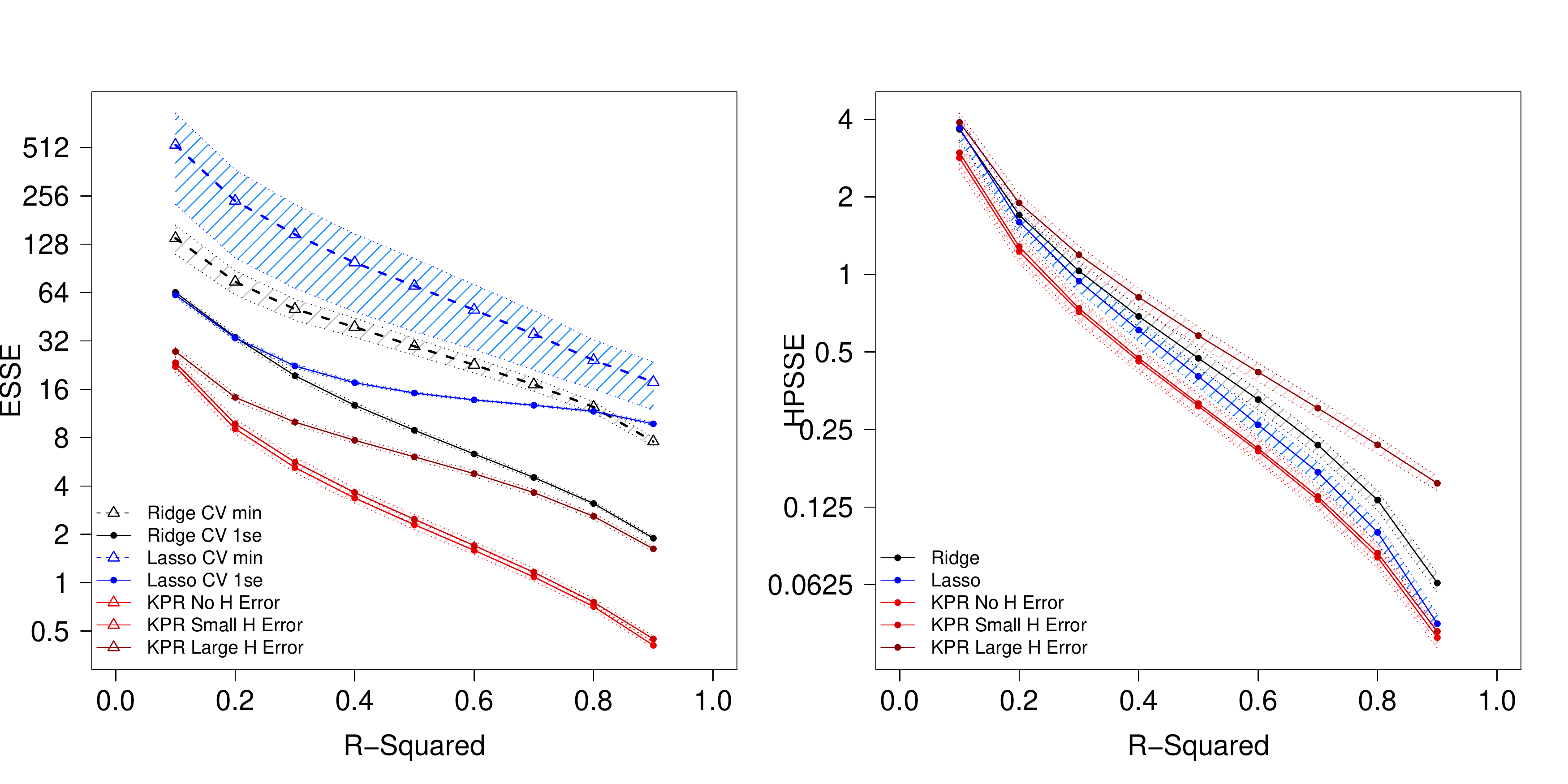}}
\caption{\small In silico evaluation of using tree-based edge information in regression models. Estimation sum squared error (ESSE) and $H$-weighted prediction sum squared error (HPSSE) of KPR (red), ridge
regression (black) and lasso (blue), with the 95\% confidence bands. For KPR, we consider
$\|H-H_{obs}\|_F/\|H\|_F = 0$ (no $H$ error), $0.25$ (small $H$ error) and $0.5$ (large
$H$ error).
}
\label{fig:EPCASim}
\end{figure}

\section{Application to an observational study}\label{sec:Applic}
We apply our kernel-penalized regression framework to  data from 16S rRNA gene collected
in a study of premenopausal women \citep{Hullar:2015}.  This study investigated aspects
of gut microbial communities in stool samples from premenopausal women
using 454 pyrosequencing of the 16S rRNA gene. The abundances of 127 species were zero
for more than 90\% of the subjects and were removed from our analysis. The data set we
consider consists of $p=128$ species sampled from $n=102$ women.

To make the measurements comparable between subjects, the species abundances were scaled
by the total number of sequences measured in each sample. This scaling produces
compositional data (the relative abundances in each sample sum to 1) which introduces
analytical complications.  In particular, regression analysis using compositional
covariates must somehow account for their unit sum constraint \citep{Kurtz:2015,Li:2014}.
For this reason, we apply the CLR transformation to the relative abundance values and use
this transformed data $\tilde{X}$ as the matrix of predictors in the KPR model.
Additionally, using Aitchison's variation matrix \citep{Aitchison:1982}, $T$, we obtain
the covariance matrix, $C$, as described prior to eq.~\eqref{eq:hatBetaC}. As $C$
provides more accurate information on the covariance among the true abundances than does
the empirical covariance matrix from relative abundances, $X$, or their CLR transform,
$\tilde{X}$, we use $C$ in place of $Q$ in \eqref{eq:hatBetaQ}.

In this example, we examine the effect of using the CLR transformed data $\tilde{X}$ and
covariance $C$ as in \eqref{eq:hatBetaC} and fit penalized regression models with the
goal of estimating $\tilde\beta_C$ in \eqref{eq:hatBetaC} for the purpose of identifying
specific species that may be associated with percent fat in the cohort described above.
To this end, we apply a recently developed significance testing procedure to three
high-dimensional models in order to identify species exhibiting evidence of association
with subjects' adiposity.  This significance test for graph-constrained estimation,
called Grace \citep{ZhaoShojaie:2015}, provides a means to assign significance to
estimates from penalized regression models that incorporate structure of the type
provided by $Q$ in \eqref{eq:hatBetaQ} (or $C$ in \eqref{eq:hatBetaC}).  The method
asymptotically controls the type-I error rate regardless of the choice of $Q$. The
special case with $Q=I$ provides a significance test for ordinary ridge regression. In
each application of the Grace test, tuning parameters are selected based on the smallest
squared test error using 10-fold cross validation. Following \citet{ZhaoShojaie:2015}, the
assumed sparsity parameter is set to be $\xi=0.05$.
The tuning parameter for the initial estimator is set to be $\lambda_{init}=4\hat\sigma_\epsilon\sqrt{3\log p/n}$, where
$\hat\sigma_\epsilon$ is the estimated standard deviation of the random error $\epsilon$,
using the scaled Lasso \citep{SunZhang:2012}.  To assess significance for the sparse
models using lasso, we apply the recently proposed significance test for lasso
regressions based on low-dimensional projection estimator (LDPE) \citep{ZhangZhang:2014,
vandegeer:2014}, which provides an asymptotically valid test for lasso-penalized
regression estimates.

\begin{table}[t]
\centering
\caption{\footnotesize
Species found  to be associated with percent fat (in increasing order of p-values) at
different significant levels using: KPR with centered log-ratio transformed abundances (CLR) ;
ridge and lasso regression with centered log-ratio transformed abundances; and ridge and
lasso regression with untransformed relative abundances (rel\%).}
\label{tab:RD}
\begin{tabular}{p{2.0cm} p{4.1cm}p{2.9cm}p{1.7cm}}
\hline \hline
	& {\footnotesize $p<0.01$}	& {\footnotesize $p<0.005$}	& {\footnotesize FDR $<0.1$}\\	\hline
{\footnotesize KPR $+$ CLR}	& {\scriptsize Bacteroides, Anaerovorax, \newline Acidaminococcus, Blautia, \newline Dethiosulfatibacter,\newline
                        Asaccharobacter, Turicibacter, \newline Lebetimonas, Streptobacillus, \newline Anoxynatronum}
                            & {\scriptsize Bacteroides, \newline Turicibacter, \newline Acidaminococcus,\newline Dethiosulfatibacter}
                            & {\scriptsize (none)}\\ \hline
{\footnotesize Ridge $+$ CLR}  & {\scriptsize (none)}	& {\scriptsize (none)}	& {\scriptsize (none)} \\ \hline
{\footnotesize Ridge $+$ rel\%} & {\scriptsize Catonella, Dethiosulfatibacter} 	& {\scriptsize (none)}	& {\scriptsize (none)} \\ \hline
{\footnotesize Lasso $+$ CLR}  & {\scriptsize Roseburia}	& {\scriptsize (none)}	& {\scriptsize (none)} \\ \hline
{\footnotesize Lasso $+$ rel\%} & {\scriptsize Dethiosulfatibacter, \newline Micropruina} 	& {\scriptsize Dethiosulfatibacter}	& {\scriptsize (none)} \\ \hline
\hline
\end{tabular}
\end{table}

We report on five regression estimation methods for which the significance of regression
coefficients can be evaluated using existing high-dimensional testing methods.  Two are
obtained using the relative abundances, $X$, with respect to: (i) an ordinary ridge
penalty and (ii) a lasso penalty. Three are obtained using the CLR transformed
abundances, $\tilde{X}$, with respect to: (iii) an ordinary ridge penalty,  (iv) a lasso
penalty,  and (v) the KPR estimate in \eqref{eq:hatBetaC}. None of these methods result
in any species associated with the outcome of percent fat when controlled for false
discovery rate (FDR) at 0.1 using the Benjamini-Yekutieli procedure \citep{BY:2001}.
However, when using a cut-off of $p=0.01$, the KPR estimate \eqref{eq:hatBetaC} results
in ten species. With a cut-off of $p=0.005$, KPR results in four species. Ordinary ridge
regressions using the CLR-transformed vectors find no associations at a cut-off of
$p=0.01$, whereas using the relative abundances, ridge finds two species at the $p=0.01$
cut-off and none at $p=0.005$. Lasso regression with the CLR-transformed vectors
identifies one specie at the $p=0.01$ cut-off and none at $p=0.005$ cut-off. When using
the relative abundances, lasso identifies two species as significant at the $p=0.01$
cut-off and one at the $p=0.005$ cutoff. See Table~\ref{tab:RD} for the list of
identified species.

\section{Discussion}\label{sec:Discussion}
We have formulated a family of regression models that naturally extends the
dimension-reduced graphical explorations common to microbiome studies. In this sense, we
have simply re-focused the role of the eigen-structures used in ordination methods toward
exploiting this structure in penalized regression models. The large family of models
developed here provides a supervised statistical learning counterpart to the unsupervised
methods of principal coordinate analysis (PCoA).

A primary motivations for PCoA graphical displays is the ability to incorporate
biologically-inclined measures of (dis)similarity. The popular use of UniFrac, for
instance, is motivated by the desire to impose phylogeny into the analysis. These
dissimilarities have also been used for rigorous statistical testing in the context of
Anderson's nonparametric MANOVA \citep{Anderson:2006} or the closely-related kernel
machine regression score test \citep{Chen:2012,Pan:2011,NiZhao:2015} for global
association of a multivariate predictor with an outcome. However, the use of UniFrac and
other non-Euclidean distances make it difficult to identify specific associations between
the microbial abundance profiles and a phenotype; indeed, none of these analyses proceed
to estimate the  individual associations. In addition to ordination displays and global
tests for associations, a variety of machine learning approaches have emphasized on
models that \emph{predict} a response. In contrast, we focus on \emph{estimating} the
coefficient vector, which is a key aspect of any approach used to draw scientific
conclusions based on the association of microbial communities with an outcome or
phenotype.

An interesting feature of the proposed kernel-penalized regression framework is its
ability to sidestep some of the problems inherent in compositional data analysis.
Indeed, as emphasized by \citet{Li:2014} regression analysis with compositional
covariates must somehow acknowledge their unit-sum constraint and spurious correlations.
Our approach, which differs somewhat from that of \citet{Li:2014}, may also be viewed as
a penalized version of the low-dimensional linear model for compositions by
\citet{Tolosana-Delgado-Chap26:2011}, who use the isometric log-ratio (ILR) coordinates.
We note that ILR coordinates arise from the SVD of mean-centered CLR-transformed data,
$\tilde{X}$ (see \citet{Egozcue-Chap2:2011}), which is also used in our model. However,
to estimate $\beta\in\bbR^p$, we used instead a regularization framework; our penalty in
Section~\ref{sec:compdata} arises from Aithison's total variation matrix whose singular
values are the total variances of ILR components. Moreover, the proposed framework also
allows us to use existing inference frameworks for high-dimensional regression, and in
particular the Grace test \citep{ZhaoShojaie:2015}, to  assess the significance of
estimated regression coefficients.

\newpage
\bibliography{KernelRegn}
\bibliographystyle{imsart-nameyear}

\end{document}